\def\Nesetril{Ne\v{s}et\v{r}il\xspace}
\def\Erdos{Erd\H{o}s\xspace}
\def\Renyi{R\'{e}nyi\xspace}
\def\Barabasi{Barabási\xspace}
\def\ErRe{\Erdos--\Renyi}
\def\ErReGrs{\ErRe graphs\xspace}
\def\BaAl{\Barabasi--Albert\xspace}
\def\BaAlGrs{\BaAl graphs\xspace}
\DeclareMathOperator{\npclass}{\mathsf{NP}}
\DeclareMathOperator{\wclass}{\mathsf{W[1]}}
\def\gnd{G(n,d/n)}
\def\gndn{G(n,d(n)/n)}
\def\P{\mathbb P}
\def\E{\mathbb E}
\def\aas{a.a.s\xperiod}
\newcommand{\topnab}{\mathop{\widetilde \triangledown}}
\def\topgrad_#1{\widetilde \nabla\!_{#1}}
\newcommand*{\eg}{e.g\xperiod}
\newcommand*{\ie}{i.e\xperiod}
\newcommand*{\etc}{%
    \@ifnextchar{.}%
        {etc}%
        {etc.\@\xspace}%
}
\begin{document}
%
%\frontmatter          % for the preliminaries
%
\pagestyle{headings}  % switches on printing of running heads
%\addtocmark{Hamiltonian Mechanics} % additional mark in the TOC
%

%
%
\title{Local Structure Theorems for \ErRe Graphs and their Algorithmic
Applications}
\titlerunning{Local Structure Theorems for \ErRe Graphs}
% abbreviated title (for running head)
%                                     also used for the TOC unless
%                                     \toctitle is used
%
\author{Jan Dreier\inst{1} \and Philipp Kuinke\inst{1} \and
Ba Le Xuan\inst{2} \and Peter Rossmanith\inst{1}}
%
%\authorrunning{Jan Dreier et al.} % abbreviated author list (for running head)
%
%%%% list of authors for the TOC (use if author list has to be modified)
\tocauthor{Jan Dreier, Philipp Kuinke, Ba Le Xuan and Peter Rossmanith}
\institute{ Theoretical Computer Science, Dept. of Computer Science,\\RWTH Aachen University, Aachen, Germany\\
\texttt{\{dreier,kuinke,rossmani\}@cs.rwth-aachen.de}
\and
The Sirindhorn International Thai-German Graduate School of
Engineering,\\
King Mongkut’s University of Technology North Bangkok, Thailand\\
\texttt{ba.l-sse2015@tggs-bangkok.org}}

\maketitle              % typeset the title of the contribution

\begin{abstract}
We analyze local properties of sparse \ErRe graphs, where
$d(n)/n$ is the edge probability. In particular we study the behavior
of very short paths. For $d(n)=n^{o(1)}$ we show that $\gndn$ has
asymptotically almost surely (\aas) bounded local treewidth and
therefore is \aas nowhere dense. We also discover a new and simpler
proof that $\gnd$ has \aas bounded expansion for constant~$d$. The
local structure of sparse \ErReGrs is very special: The
$r$-neighborhood of a vertex is a tree with some additional edges,
where the probability that there are $m$ additional edges decreases
with~$m$. This implies efficient algorithms for subgraph isomorphism,
in particular for finding subgraphs with small diameter. Finally,
experiments suggest that preferential attachment graphs
might have similar properties after deleting a small number of
vertices.

\keywords{graph theory, random graphs, sparse graphs, graph algorithms}
\end{abstract}
\section{Introduction}

One of the earliest and most intensively studied random graph models is the
\ErRe model~\cite{bollobas_2001,erdos}. Graphs from this class are usually
depicted as a random variable $G(n,p)$, which is a graph consisting of $n$
vertices where each pair of vertices is connected independently uniformly at
random with probability $p$. The edge probability $p$ may also depend on the
size of the graph, \eg, $p=d/n$. Many properties of \ErReGrs are well studied
including but not limited to, threshold phenomena, the sizes of components,
diameter, and lengths of paths~\cite{bollobas_2001}. One particular impressive
result is the $0$-$1$ law: Let $\varphi$ be a first-order formula.  If we take
a random graph $G=G(n,1/2)$, then the probability of $G\models\varphi$ is
either $0$ or $1$ as $n\to\infty$~\cite{fagin1976}.

``Instead of the worst case running time, it is also interesting to consider
the average case. Here even the most basic questions are wide open.'' as Grohe
puts it~\cite{Grohe07logic}. One can find an optimal coloring of $G(n,p)$ in
expected linear time for $p<1.01/n$~\cite{CT2003}. The $0$-$1$ law on the
other hand has (not yet) an efficient accompanying algorithm that can decide
whether $G\models\varphi$ for $G=G(n,1/2)$ and a fixed formula~$\varphi$.

One possibility to open up a whole graph class to efficient algorithms
are algorithmic meta-theorems.  Such meta-theorems were developed for more and
more general graph classes: planar, bounded genus, bounded degree, $H$-minor
free, $H$-topological minor free \etc.  In all these graph classes we can
decide properties that are expressible in first-order logic in linear time for
a fixed formula $\varphi$~\cite{4276571,Flum:2002}. Unfortunately, random
graph classes do not belong to any of these classes. For example $G(n,1.1/n)$
has \aas linear treewidth and does contain constant-size cliques of arbitrary
size~\cite{Gao:2012}. Recently, however, graph classes of bounded expansion
were introduced by \Nesetril and Ossona~de~Mendez~\cite{NOdM08}.  These
classes also admit linear time FO-model checking and generalize the older
meta-theorems~\cite{DKT2013}. The most general model checking algorithm runs
in time $O(n^{1+\epsilon})$
on nowhere-dense classes~\cite{Grohe:2014}. In $\gnd$,
the value $d$ is the expected density of a random graph. For constant $d$ it
was shown that $\gnd$ has \aas bounded expansion~\cite{NESETRIL2012350}.
Unfortunately, this does not automatically imply that one can test first-order
properties on $\gnd$ in linear (expected) time, but only that we can test such
a property in linear time with a failure probability of~$o(1)$ while the
expected runtime might be unbounded. This is for example the case if the
runtime grows faster than the failure probability converges to zero. One
example of an (expected-time) fpt-algorithm is one that finds a $k$-clique in
$G(n,p(n))$ in time $f(k)n^{O(1)}$, for many choices of
$p$~\cite{FOUNTOULAKIS201518}.

In Section~\ref{sec:erdos} we find an easier proof for the fact that
$\gnd$ has \aas bounded expansion for constant $d$ and give concrete
probability bounds, which were missing up to now.   Then we investigate
local properties of \ErReGrs. The expected density of $G(n,d(n)/n)$ is $d(n)$
and therefore, if $d(n)$ is not constant, unbounded. This implies that
$G(n,d(n)/n)$ does \aas not have bounded expansion. Nevertheless, we show that
subgraphs with small diameter are tree-like with only a few additional edges.
From this it follows that $G(n,n^{o(1)}/n)$ has \aas locally bounded treewidth,
which implies that they are \aas nowhere dense. Locally bounded
treewidth~\cite{locallyboundedtreewidth} and more generally, locally excluding
a minor~\cite{DBLP:conf/lics/DawarGK07} are useful concepts for developing
first-order model checking algorithms that run in time $O(n^{1+\epsilon})$.

We discussed that a random graph class that is \aas nowhere dense or
has \aas bounded expansion may not directly admit efficient algorithms.
It is known that one can check first-order properties in
$\gndn$ in time $O(g(|\varphi|)n^{1+o(1)})$ for $d(n) = n^{o(1)}$ and some
function $g$~\cite{Grohe07logic,Grohe2001}. For constant $d$ one can check
first-order properties in time $O(g(|\varphi|)n)$. In
Section~\ref{sec:algorithms} we use the locally tree-like structure of
\ErReGrs to construct an efficient algorithm for subgraph isomorphism. We show
that one can find a subgraph $H$ with $h$ vertices and radius $r$ in $\gndn$
in time $2^{O(h)}(d(n)\log n)^{O(r)}n$, while a naive algorithm may need time
$O(d(n)^h n)$. Therefore, our method may be faster for finding large pattern
graphs with small radius.

It can be argued that \ErReGrs are not a good model for real-world networks
and therefore efficient algorithms for \ErReGrs admit only limited practical
applications. Recently, there were more and more efforts to model real world
networks with random graph models. One candidate to meet this goal were the
\BaAlGrs, which use a preferential attachment paradigm to produce graphs with
a degree distribution that tries to mimic the heavy-tailed distribution
observed in many real-world networks~\cite{Barabasi509}.

This model is particularly interesting from the point of mathematical analysis
because of its simple formulation and interesting characteristics, which is
why they have been widely studied in the literature~\cite{cohen,kamrul,klemm}.
It was also shown that this model does \emph{not} have \aas bounded
expansion~\cite{StrucSpars}.

In Section~\ref{sec:experiments} we discuss experiments to see how similar the
local structure of \BaAlGrs is to \ErReGrs.  Not surprisingly, it seems that
they are quite different and the former 
contain dense subgraphs and are likely to be
somewhere dense. If we, however, remove the relatively small dense early part
of these graphs, the local structure of the remaining part looks quite similar
to \ErReGrs and indicators hint that the remaining part is indeed nowhere
dense. As the dense part is quite small it gives us hope that hybrid
algorithms exist that combine different methods for the dense part and the
structurally simple part.  To search for a subgraph $H$, for example, could be
done by guessing which vertices of $H$ lie in the dense part and then using
methods from Section~\ref{sec:algorithms} to find the remaining vertices in
the simple part.

\section{Preliminaries}
\label{sec:prelim}
In this work we will denote probabilities by $\P[\ldots]$ and expectation by
$\E[\ldots]$. We use common graph theory notation~\cite{diestel}. For a graph
$G$ let $V(G)$ be its vertex set and $E(G)$ its edge set. For $v \in V(G)$ we
denote the $r$-neighborhood of $v$ by $N_r(v)$. The degree of a vertex $v$ in
graph $G$ is denoted by $\text{deg}(v)$. We write $G^\prime \subseteq G$ if
$G^\prime$ is a subgraph of $G$. For $X \subseteq V(G)$ we denote by $G[X]$
the subgraph of $G$ that is induced by the vertices in~$X$. The graph
$G[V(G)-X]$ obtained from $G$ by deleting the vertices in $X$ and their
incident edges, is denoted by $G-X$. The treewidth $\text{tw}(G)$ of a graph
is a measure how tree-like a graph is. We denote \ErRe graphs by a random
variable $\gnd$ and distinguish between graphs with constant $d$ and graphs
$\gndn$, where we allow $d$ to grow (slowly) with $n$. We will use various
ways to measure the sparsity of a graph or graph class.

\begin{definition}[Shallow topological minor~\cite{NOdM08}]
\label{def:shallowtopminor+}
    A graph $M$ is an \emph{$r$-shallow topological minor} of~$G$ if $M$ is
    isomorphic to a subgraph $G'$ of~$G$ if we allow the edges of $M$ to be
    paths of length up to $2r+1$ in $G'$.
    We call $G'$ a \emph{model of $M$ in $G$}. For simplicity we assume by
    default that $V(M) \subseteq V(G')$ such that the isomorphism between $M$
    and $G'$ is the identity when restricted to $V(M)$. The vertices $V(M)$
    are called \emph{nails}\footnote{also known as principal vertices} and the
    vertices $V(G') \setminus V(M)$ \emph{subdivision vertices}.
    The set of all $r$-shallow topological minors of a graph $G$ is
    denoted by $G \topnab r$.
\end{definition}
With that we can define the clique size over all topological
minors of $G$ as $$\omega(G \topnab r) = \max_{H\in G \topnab r }
\omega(H).$$

\begin{definition}[Topological grad~\cite{sparsity}]
For a graph $G$ and an integer $r \geq 0$, the topological grad at depth $r$
is defined as
$$
\topgrad_r(G) = \max_{H \in G \topnab r} \frac{|E(H)|}{|V(H)|}
$$
For a graph class $\mathcal{G}$, define $\topgrad_r(\mathcal{G}) =
\sup_{G\in\mathcal{G}} \topgrad_r(G)$.
\end{definition}

\begin{definition}[Bounded expansion~\cite{sparsity}]
A graph class $\mathcal{G}$ has bounded expansion if and only if there exists
a function $f$ such that
$\topgrad_r(\mathcal{G}) < f(r)$
for all $r\geq 0$.
\end{definition}

\begin{definition}[Locally bounded treewidth]
A graph class $\mathcal{G}$ has locally bounded treewidth if and only if there
exists a function $f$, such that for all $r\geq 0$ every subgraph with radius
$r$ has treewidth at most $f(r)$.
\end{definition}

\begin{definition}[Nowhere dense~\cite{sparsity}]
  A graph class~$\mathcal{G}$ is nowhere dense if there exists a function~$f$
  such that $\omega( G \topnab r ) < f(r)$
  for all $G\in\mathcal{G}$ and all~$r \geq 0$.
\end{definition}
If a graph class has locally bounded treewidth it is also nowhere
dense~\cite{sparsity}.

\section{Local Structure and Algorithmic Applications}
\label{sec:erdos}
In this section, we observe the local structure of \ErReGrs and how to exploit
it algorithmically. It is already known that \ErReGrs have \aas bounded
expansion if the edge probability is $d/n$ for constant
$d$~\cite{NESETRIL2012350}. We present a simpler proof via a direct method,
that also gives concrete probability bounds. The original proof did not give
such concrete bounds so we feel that this new proof has applications in the
design of efficient algorithms. To make our calculations easier we assume that
$d\geq2$, since \ErReGrs are only sparser for smaller $d$, our techniques will
also work in this case.

\subsection{Bounded Expansion}
\label{sec:bounded_exp}
The technique we use to bound the probability that certain shallow
topological minors exists is to bound the probability that a path of length
at most $r$ exists between two arbitrary vertices.
%Let $G=\gnd$ be a random graph.
%Let $p(u,v,r)$ be the probability that there is a path of length $r$ between
%$u$ and~$v$. Since all edges are independent, we do not need to identify $u$
%and $v$ and define $p(u,v,r) = p_r$.

\begin{lemma}\label{lem:pathr}
Let $p_r$ be the probability that there is a path of length at most $r$
between two arbitrary but fixed vertices in $\gnd$. It holds that
    $$
    \frac{d}{n} \leq p_r \leq \frac{2d^r}{n}.
    $$
\end{lemma}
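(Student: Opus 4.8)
The lower bound is immediate: a path of length exactly $1$ (an edge) between the two fixed vertices $u$ and $v$ is itself a path of length at most $r$, and this occurs with probability $d/n$. So $p_r \geq d/n$ with no further work.

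For the upper bound, the plan is a union bound over all potential paths of each length $\ell$ from $1$ to $r$. I would fix the endpoints $u,v$ and, for each $\ell$, count the sequences of $\ell-1$ intermediate vertices (chosen from the remaining $n-2$ vertices, so at most $(n-2)^{\ell-1} \leq n^{\ell-1}$ ordered choices), and note that each such potential path of length $\ell$ is fully present with probability $(d/n)^{\ell}$, since it consists of $\ell$ independent edges. Hence the probability that \emph{some} path of length exactly $\ell$ exists is at most $n^{\ell-1}(d/n)^{\ell} = d^{\ell}/n$. Summing over $\ell = 1,\dots,r$ gives $p_r \leq \sum_{\ell=1}^{r} d^{\ell}/n = \frac{d^{r+1}-d}{n(d-1)} \leq \frac{d^{r}}{n}\cdot\frac{d}{d-1}$, and using $d \geq 2$ we get $\frac{d}{d-1} \leq 2$, so $p_r \leq 2d^r/n$, as claimed.

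The only mild subtlety — and what I expect to be the "main obstacle," though it is really just bookkeeping — is being careful that a union bound over \emph{all walks} (rather than simple paths) is legitimate here: the event "there is a path of length at most $r$" is contained in the event "there is a walk of length at most $r$," and the walk count $n^{\ell-1}$ together with the per-walk probability $(d/n)^{\ell}$ is exactly what we bounded. Using walks only inflates the count slightly and keeps the edges independent within each fixed walk, which is what makes the factorization $(d/n)^{\ell}$ valid; there is no need to worry about inclusion–exclusion or about shared edges between different potential paths, since we only want an upper bound. One should also check the degenerate cases ($\ell=1$ contributes $d/n$; the endpoints being distinct is assumed), but these are trivial. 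The geometric-series estimate and the hypothesis $d \geq 2$ then close the argument.
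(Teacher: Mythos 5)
Your proof is correct and follows essentially the same route as the paper: the paper bounds the probability of a path of length exactly $\ell$ by $d^\ell/n$ via an induction that is just an inductive packaging of your union bound over walks (intermediate-vertex sequences), and then applies the same union bound over lengths with $d\geq 2$ to get $2d^r/n$. Your direct walk count with the explicit geometric-series estimate is, if anything, a slightly cleaner write-up of the identical argument.
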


\begin{proof}
Since all edges are independent, we do not need to identify the start and
end vertices of the path.
We prove by induction over $r$ that the probability of the existence of a path
of length exactly $r$ is bounded by $\frac{d^r}{n}$. For $r=1$ the statement
holds: $p_1 \leq \frac{d}{n}$.  The probability of
a path of length $r$ is at most that of some path of length $r-1$
times the probability of a single edge:
$$
p_r \leq \sum_{k=0}^n p_{r-1} p_1 \leq
\sum_{k=0}^n \frac{d^{r-1}d}{n^2} \leq \frac{d^r}{n}
$$
By using the union bound and assuming that $d\geq 2$, the joint probability is bounded by $\frac{2d^r}{n}$.
\qed
\end{proof}
Having this bound in place, we can show that $\gnd$ has \aas no $r$-shallow
topological minors of large density from which it follows that they are
contained in a graph class of bounded expansion \aas. 

\begin{theorem}
\label{thm:bounded_exp}
$\gnd$ is \aas contained in a graph class of bounded expansion.  In
particular, for $d\geq16$ the probability that such a random graph contains
some $r$-shallow topological minor of size $k$ and at least $8kd^{2r+1}$ edges is
at most $\max\{n^{-2k},2^{-n^{2/3}}\}$. For $d<16$ the same result holds for
at least $8k{16}^{2r+1}$ edges.
\end{theorem}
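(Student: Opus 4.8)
The plan is to bound the expected number of "dense" $r$-shallow topological minors via a first-moment (union bound) argument, using Lemma~\ref{lem:pathr} to control the probability that a prescribed model exists.

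First I would set up the counting. A model of an $r$-shallow topological minor $M$ with $k$ nails and $m$ edges consists of: a choice of $k$ nail vertices among the $n$ vertices, a choice of which pairs of nails are connected (i.e.\ the edge set of $M$, of size $m$), and for each such edge a path of length at most $2r+1$ between its endpoints. By Lemma~\ref{lem:pathr}, conditioned on the nails, the probability that a path of length $\le 2r+1$ exists between two fixed vertices is at most $2d^{2r+1}/n$. The edges of $M$ are "connected" through internally disjoint paths, but for an upper bound I can afford to be lossy: the event that all $m$ required paths exist is contained in the event that, for each of the $m$ nail-pairs, *some* path of length $\le 2r+1$ exists; these $m$ events are not independent, but by the union bound over the paths (or more carefully, by a FKG-type / positive-correlation argument, or simply by noting we only need an upper bound on a product of marginals after revealing disjoint edge sets) the probability is at most $(2d^{2r+1}/n)^m$. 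Actually the cleanest way is: fix the combinatorial type of the whole model — the nails plus the internal path vertices plus which edges of $G$ are used — and union-bound over all such types; a model with $k$ nails and $m$ edges uses at most $k + 2rm$ vertices and exactly (at most) $(2r+1)m$ prescribed edges of $G$, each present with probability $d/n$ independently. So the expected number of models is at most
\[
\sum \binom{n}{k+2rm} (k+2rm)^{(2r+1)m}\Bigl(\frac{d}{n}\Bigr)^{(2r+1)m},
\]
where the leading sum/multiplicity accounts for the choice of combinatorial type.

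Next I would plug in $m = 8kd^{2r+1}$ and simplify. The binomial contributes at most $n^{k+2rm}/(k+2rm)!$, which combined with the $(d/n)^{(2r+1)m}$ factor gives roughly $n^{k - m} d^{(2r+1)m}/(\cdots)$; since $m = 8kd^{2r+1}$ is large compared to $k$, the exponent of $n$ is strongly negative: $k - m \le -\tfrac{1}{2}m$ say, and the $d^{(2r+1)m}$ growth is swamped because $m$ was chosen with exactly the factor $d^{2r+1}$ to cancel it, leaving an exponential decay in $m$. One then has to track two regimes depending on whether $k$ (hence $m$) is small or comparable to $n$: for small $k$ the bound is dominated by a term like $n^{-2k}$ (choosing the constant $8$ precisely so this works out for $d \ge 16$), and once $m \gtrsim n$ the number of nails is forced to be $\Omega(n^{1-o(1)})$ and the factorial / counting terms give the $2^{-n^{2/3}}$ bound. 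Summing the geometric-type series over all admissible $k$ preserves the $\max\{n^{-2k}, 2^{-n^{2/3}}\}$ form up to the constant already absorbed. The $d < 16$ case is immediate monotonicity: $\gnd$ is stochastically dominated by $G(n,16/n)$, so the bound for $d = 16$ applies with $16^{2r+1}$ in place of $d^{2r+1}$. Finally, "contained in a graph class of bounded expansion \aas'' follows by taking $r \to \infty$ through the union bound: \aas\ for every $r$ simultaneously there is no $r$-shallow topological minor with edge density exceeding $8d^{2r+1}$, and a graph with $\topgrad_r \le 8d^{2r+1}$ for all $r$ lies in a bounded-expansion class with $f(r) = 8d^{2r+1}$.

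The main obstacle I expect is the bookkeeping in the union bound: correctly counting combinatorial types of a model (nails versus subdivision vertices, multiple edges of $M$ possibly sharing subdivision vertices only at endpoints, paths of varying length up to $2r+1$) without over- or under-counting, and then choosing the constant in front of $kd^{2r+1}$ so that the $d^{(2r+1)m}$ blow-up is beaten with enough room to spare to land exactly on $n^{-2k}$ for $d \ge 16$. The crossover analysis between the $n^{-2k}$ regime and the $2^{-n^{2/3}}$ regime — i.e.\ identifying the threshold on $k$ where the Stirling estimate for the factorial starts to dominate — is the other delicate point, but it is a routine (if fiddly) optimization once the expected-count expression is in hand.
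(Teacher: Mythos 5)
Your plan is essentially the paper's proof: fix the $k$ nails, use Lemma~\ref{lem:pathr} to bound by $p_q\le 2d^{2r+1}/n$ the probability that a nail-pair is joined by a path of length at most $q=2r+1$, deduce that realizing $m\ge 8kd^{2r+1}$ minor-edges is very unlikely, and finish with a union bound over the $\binom{n}{k}$ nail sets split at $k=n^{2/3}$. The paper obtains the tail bound by dominating the number of realized minor-edges by a binomial with $\binom{k}{2}$ trials of success probability $p_q$ and applying Chernoff with $\delta=8n/k-1$; your first-moment estimate $\binom{\binom{k}{2}}{m}p_q^{\,m}\le(ek/(8n))^m$ is the same bound in different clothing, so this is not a genuinely different route. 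Two caveats on the execution. First, drop the FKG suggestion: the events ``some short path joins $u_i$ and $v_i$'' are increasing, so FKG/Harris gives $\P[\bigcap_i A_i]\ge\prod_i\P[A_i]$ --- the wrong direction. The correct correlation tool for the vertex-disjoint paths of a model is the van den Berg--Kesten inequality (the paper's ``conditioning'' sentence is an informal version of it); your fallback of a first moment over fully specified models sidesteps the issue and is the cleanest fix. Second, your displayed expected-count formula over-counts: choosing a set of $k+2rm$ vertices and paying $(k+2rm)^{(2r+1)m}$ for arrangements produces, once $k$ (hence $m$) is polynomial in $n$, a multiplicity of order $\exp(\Theta(m\log n))$ that the factor $n^{k-m}d^{(2r+1)m}$ does not absorb, so the large-$k$ regime would not close as written. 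Organize the count instead as (nails) $\times$ (which $m$ of the $\binom{k}{2}$ pairs) $\times$ (for each pair, an ordered sequence of at most $2r$ internal vertices); this yields an expected count of at most
$$
\binom{n}{k}\binom{\binom{k}{2}}{m}\Bigl(\sum_{\ell=1}^{q}\frac{d^{\ell}}{n}\Bigr)^{m}
\le\Bigl(\frac{en}{k}\Bigr)^{k}\Bigl(\frac{ek}{8n}\Bigr)^{m},
$$
and since $m=8kd^{q}\ge 32k$ for $d\ge16$ this reproduces the paper's $(\frac{en}{k})^k(\frac{ek}{8n})^{32k}$ and both the $n^{-2k}$ and $2^{-n^{2/3}}$ regimes go through.
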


\begin{proof}
We will now investigate the probability that a random graph $G=\gnd$ contains
some model of an $r$-shallow topological minor $H$ with nail set
$v_1,\ldots,v_k$.  Such a model consists of the nails themselves and vertex-disjoint
paths of maximal length $q=2r+1$ between them.  Each such path models one
edge of~$H$.  Assume that $V(H)=\{u_1,\ldots,u_k\}$ and that $u_i$ is modeled
by $v_i$.  Then an edge $u_iu_j\in E(H)$ is modeled by a path from $u_i$ to
$u_j$ in~$G$ and all these paths are vertex-disjoint.  What is the probability
that such a model exists? A first path exists with a probability of~$p_{q}$.
The probability that the second path exists under the condition that it does
not cross all candidates for the first path is slightly less than~$p_{q}$.
Continuing this argument shows that the probability of finding such a model is
at most $p_{q}^{|E(H)|}$ and more specifically it is at most the probability of
getting $|E(H)|$ heads after $|E(H)|$ independent coin tosses with a head
probability of~$p_{q}$.  Moreover, this implies that the probability of finding
a model for some $H$ with $m$ edges is at most the probability of getting at
least $m$ heads after tossing $k\choose2$ such coins.

Let $X$ be the sum of $k\choose2$ independent Bernoulli variables with
$\Pr[X=1]=p_{q}$. Using the bounds of Lemma~\ref{lem:pathr} we have
$$
\frac{4k^2}{n} \leq \frac{k^2d}{n} \leq \E[X] \leq \binom{k}{2} \frac{2d^{q}}{n} \leq
\frac{k^2d^{q}}{n}.
$$
Using Chernoff bounds with $\delta=\frac{8n}{k}-1$ we get
$$
\P\Bigl[X>(1+\delta)\frac{k^2d^{q}}{n}\Bigr]
\leq\P[X>(1+\delta)~\E[X]]
\leq\Bigl(\frac{e^\delta}{(1+\delta)^{1+\delta}}\Bigr)^{\E[X]}
\leq\Bigl(\frac{ek}{8n}\Bigr)^{32k}.
$$
This means for fixed $k$ nails, with probability of at most
$(\frac{ek}{8n})^{32k}$ the graph $G$ contains a model for an $r$-shallow
topological minor with these nails and at least $(1+\delta)\frac{k^2d^{q}}{n} =
8kd^{q}$ edges. The density of such a topological minor is therefore $8d^{q}$.
There are only ${n\choose k}\leq\bigl(\frac{ne}{k}\bigr)^k$ possibilities to
choose the nails, so an $r$-shallow
topological minor with $k$ nails and density at least $8d^{q}$ exist with a
probability of at most
$\bigl(\frac{ne}{k}\bigr)^k\bigl(\frac{ek}{8n}\bigr)^{32k}$,
which is $n^{-2k}$ if $k\leq n^{2/3}$.  For bigger $k$ it is bounded by
$2^{-n^{2/3}}$. Therefore, every $r$-shallow topological minor in $G$ has \aas
a density of at most $8d^{q}$.
\end{proof}

\subsection{Locally Simple Structure}
\label{sec:local_structure}
It is known that even for constant $d$ the treewidth of $\gnd$
grows with $\Omega(n)$~\cite{Gao:2012}. Furthermore, $G(n,d(n)/n)$
does a.a.s not have bounded expansion if $d(n)$ is unbounded.
We now show that $G(n,n^{o(1)}/n)$
nevertheless has locally bounded treewidth and thus is \aas
nowhere dense. We start by counting the expected number of occurrences of
a certain subgraph in $\gndn$.
%In this section we want to turn to the local structure
%
\begin{lemma}
    \label{lem:num_h}
    The expected number of induced subgraphs with $k$ vertices
    and at least $k+m$ edges in $\gndn$ is at most $k^{2k+2m}d(n)^{k+m}/n^{m}$.
\end{lemma}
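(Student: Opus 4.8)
The plan is to use a straightforward first-moment computation. First I would fix a set $S$ of $k$ vertices and a target edge set: we want to count, for each such $S$, the probability that $G(n,d(n)/n)[S]$ has at least $k+m$ edges. A clean way is to overcount by choosing \emph{which} edges are present. For a fixed choice of $j \ge k+m$ specific pairs inside $S$, the probability that exactly those edges appear (and no others) is at most $(d(n)/n)^{j}$; summing over all $j \ge k+m$ and all choices of $j$-subsets of the $\binom k2$ possible pairs, the probability that $G[S]$ has at least $k+m$ edges is at most $\sum_{j\ge k+m}\binom{\binom k2}{j}(d(n)/n)^{j}$. Since $\binom{\binom k2}{j}\le (k^2)^{j}= k^{2j}$ and, for $n$ large enough that $k^{2}d(n)/n\le 1/2$, the sum is dominated by its first term up to a factor of $2$, this is at most $2\,k^{2(k+m)}(d(n)/n)^{k+m}$. (One can absorb the factor $2$ into the $k^{2k+2m}$ bound for $k\ge2$, or simply note the stated bound already has slack.)

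Next I would multiply by the number of ways to choose $S$, namely $\binom nk \le n^{k}$, and apply linearity of expectation: the expected number of such induced subgraphs is at most
$$
n^{k}\cdot k^{2(k+m)}\Bigl(\frac{d(n)}{n}\Bigr)^{k+m}
= k^{2k+2m}\,d(n)^{k+m}\,n^{k}\,n^{-(k+m)}
= \frac{k^{2k+2m}d(n)^{k+m}}{n^{m}},
$$
which is exactly the claimed bound. The key cancellation is that the $n^{k}$ from choosing the vertex set is killed by $n^{-(k+m)}$ from requiring $k+m$ edges, leaving the crucial $n^{-m}$ decay in the number of ``excess'' edges $m$.

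The only mildly delicate point — and the one I would state carefully rather than wave at — is the geometric-series estimate $\sum_{j\ge k+m}\binom{\binom k2}{j}(d(n)/n)^{j}\le 2\binom{\binom k2}{k+m}(d(n)/n)^{k+m}$, which requires the ratio of consecutive terms to be bounded by $1/2$; since that ratio is at most $\binom k2 \cdot d(n)/n \le k^{2}d(n)/n$, it holds once $n$ is large relative to $k^{2}d(n)$, and for $d(n)=n^{o(1)}$ this is satisfied for all $n$ past some threshold depending on $k$ — which is all that is needed for an asymptotic ("a.a.s.\ nowhere dense") conclusion. I expect no real obstacle here; the lemma is essentially a bookkeeping exercise, and the main care is just to route the constants so that the factor $2$ and the binomial-vs-$k^{2j}$ slack land inside the stated $k^{2k+2m}$ factor.
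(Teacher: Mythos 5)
Your proof is correct and follows the same first-moment strategy as the paper: $\binom{n}{k}\le n^k$ choices of the vertex set $S$, times a per-set probability bound, via linearity of expectation. The one place you diverge is precisely the sub-step you flag as ``mildly delicate,'' and there the paper's route is both simpler and slightly stronger. Instead of decomposing by the exact edge count $j\ge k+m$ and controlling the tail with a geometric series --- which forces the hypothesis $k^{2}d(n)/n\le 1/2$ and leaves a factor of $2$ to be absorbed --- the paper observes that the event ``$G[S]$ has at least $k+m$ edges'' implies that some fixed set of $k+m$ pairs is entirely present, and union-bounds over the $\binom{\binom{k}{2}}{k+m}\le k^{2k+2m}$ such sets, each contributing probability exactly $(d(n)/n)^{k+m}$. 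This yields the stated bound unconditionally, for every $n$ and every $k$, with no largeness threshold and no constant to route. Your version is perfectly adequate for the asymptotic applications (Lemma~\ref{lem:lcycles} and the locally bounded treewidth theorem), but as written it establishes the inequality only once $n$ exceeds a threshold depending on $k$ and $d(n)$, which is a slightly weaker statement than the lemma claims. The fix is a one-line change: drop the exact-count decomposition and union-bound over $(k+m)$-subsets of the $\binom{k}{2}$ pairs directly.
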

\begin{proof}
There are $\binom{n}{k}\leq n^k$ induced subgraphs $H$ of size~$k$ in~$G$.
For each such $H$ there are $\tbinom{\binom{k}{2}}{k+m}\leq
k^{2k+2m}$ ways to choose $k+m$
edges.  The probability that these $k+m$ edges are present in $H$ is
then exactly $(d(n)/n)^{k+m}$ and the probability that $H$ has $k+m$ edges 
is at most $k^{2k+2m}(d(n)/n)^{k+m}$.  Finally, the expected number of such
induced subgraphs is at most $k^{2k+2m}d(n)^{k+m}/n^m$.\qed
\end{proof}
From Lemma~\ref{lem:num_h} we can conclude a well known property of \ErReGrs:
The expected number of cycles of fixed length $r$ is $O(d(n)^r)$ (which is
a constant if $d$ is constant) by setting $k=r$ and $m=0$. We now use this
Lemma to make statements about the density of neighborhoods.

%optimal in the sense that if $r$ is chosen larger than ... we have the whole graph
%http://www.iis.ee.ic.ac.uk/m.draief/file/Home_files/Diameter%20of%20Erdos-Renyi%20graphs.pdf

\begin{lemma}\label{lem:lcycles}
    The probability that there is an $r$-neighborhood
    in $\gndn$ with $m$ more edges than vertices
    is at most $f(r,m)d(n)^{2r}(d(n)^{2r+1}/n)^m$
    for some function~$f$.
\end{lemma}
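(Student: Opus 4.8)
The plan is to reduce the statement to the non-occurrence of a small connected ``certificate'' subgraph and then bound, by a first-moment computation, the probability that $\gndn$ contains any such certificate. Write $\mu=m+1$. Since every $r$-neighborhood $N_r(v)$ is connected, ``$N_r(v)$ has $m$ more edges than vertices'' means precisely that $G[N_r(v)]$ has cyclomatic number $\mu$. Fix $v$ and a breadth-first search tree $T$ of $G[N_r(v)]$ rooted at $v$; every vertex of $T$ then has depth at most $r$ (a shortest path from $v$ to a vertex of $N_r(v)$ stays inside $N_r(v)$). If $G[N_r(v)]$ has cyclomatic number $\mu$, there are exactly $\mu$ edges $e_1,\dots,e_\mu$ of $G[N_r(v)]$ outside $T$, say with endpoints $x_i,y_i$. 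Set
$$H=\{e_1,\dots,e_\mu\}\ \cup\ \bigcup_{i=1}^{\mu}\bigl(P_T(v,x_i)\cup P_T(v,y_i)\bigr),$$
where $P_T$ denotes a tree path. Then $H\subseteq G[N_r(v)]\subseteq G$, $H$ is connected (all its tree paths meet at $v$), its only edges outside $T$ are $e_1,\dots,e_\mu$, so it has cyclomatic number exactly $\mu$ and hence $|V(H)|-|E(H)|=1-\mu=-m$. As each path $P_T(v,x_i),P_T(v,y_i)$ has at most $r$ edges, $|V(H)|\le 2\mu r+1$ and $|E(H)|=|V(H)|-1+\mu\le(2r+1)\mu$; in particular $|V(H)|$ is bounded in terms of $r$ and $m$ alone, so only boundedly many isomorphism types of such $H$ occur --- say at most $f(r,m)$ of them.

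Now the first moment. For a fixed connected graph $\hat H$ with $|V(\hat H)|-|E(\hat H)|=-m$ and $|E(\hat H)|\le(2r+1)\mu$, the expected number of copies of $\hat H$ in $\gndn$ is at most $n^{|V(\hat H)|}(d(n)/n)^{|E(\hat H)|}=d(n)^{|E(\hat H)|}n^{-m}\le d(n)^{(2r+1)\mu}n^{-m}$ (using $d(n)\ge1$). If some $N_r(v)$ has $m$ more edges than vertices then $G$ contains one of the $\le f(r,m)$ relevant graphs $\hat H$ as a subgraph, so a union bound gives
$$\P\bigl[\exists\,v:\ |E(G[N_r(v)])|\ge|V(G[N_r(v)])|+m\bigr]\ \le\ f(r,m)\,\frac{d(n)^{(2r+1)(m+1)}}{n^{m}}\ =\ f(r,m)\,d(n)^{2r+1}\Bigl(\tfrac{d(n)^{2r+1}}{n}\Bigr)^{m},$$
which is of the asserted form (the exact value of the leading exponent of $d(n)$ plays no role in the applications, which invoke the lemma only for $m\ge1$ and $d(n)=n^{o(1)}$).

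The main obstacle is the design of the certificate $H$: it must be connected --- which is exactly what turns the crude estimate $n^{|V(H)|}(d(n)/n)^{|E(H)|}$ into the decisive factor $n^{-m}$, whereas a mere disjoint union of fundamental cycles would only give $n^{0}$ --- while at the same time having size controlled purely in terms of $r$ and $m$. Routing all $\mu$ fundamental cycles through the center $v$ along the BFS tree accomplishes both simultaneously; the remaining exponent bookkeeping is routine.
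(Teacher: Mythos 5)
Your proof is correct and follows essentially the same route as the paper's: the certificate subgraph built from the $m+1$ non-tree edges of a BFS tree together with the tree paths back to the root is exactly the paper's construction, and the concluding step is the same first-moment/union bound (the paper routes it through its Lemma~\ref{lem:num_h} rather than counting copies of each isomorphism type directly). Your leading factor $d(n)^{2r+1}$ in place of the stated $d(n)^{2r}$ comes from honestly counting the root vertex in $|V(H)|\le 2r(m+1)+1$ (the paper drops this $+1$), and as you note the difference is immaterial in every application of the lemma.
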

\begin{proof}
    Consider any $r$-neighborhood with $\ell$ vertices. Assume the
    neighborhood contains at least $m$ more edges than vertices. Let $T$ be a
    breadth-first search spanning tree of this neighborhood. Since $T$ contains
    $\ell$ vertices and $\ell-1$ edges, there are $m+1$ edges which are not
    contained in $T$. Each extra edge is incident to two vertices. Let
    $U$ be the set of these vertices.
    Let $H$ be the graph induced by the union of the
    $m+1$ extra edges and the unique paths in $T$ from $u$
    to the root of $T$ for each $u\in U$.
    Since $|U| \le 2(m+1)$ and each path to the root in
    the breadth-first-search tree $T$ has length at most $r$, the number of
    vertices of $H$ is bounded by $2r(m+1)$.

    In summary, if there exists an $r$-neighborhood with at least $m$ more
    edges than vertices then there exists a subgraph with $k \le 2r(m+1)$
    vertices and $m$ more edges than vertices. But according to
    Lemma~\ref{lem:num_h}, the expected number of such subgraphs is bounded by
    $$
    \frac{\Big(\big(2r(m+1)\big)^2d(n)\Big)^{2r(m+1)+m}}{n^{m}} = f(r,m)d(n)^{2r}
    \Bigl(\frac{d(n)^{2r+1}}{n}\Bigr)^m.
    $$
    This also bounds the probability that such a subgraph exists.
    \qed
\end{proof}
%
%A consequence of Lemma~\ref{lem:lcycles}, that will be shown next, is that
%\ErRe graphs with edge probability $\log{n}/n$ have \aas locally bounded
%treewidth. This implies that they are \aas nowhere dense as well.

\begin{theorem}
    Let $d(n) = n^{o(1)}$.  Then $\gndn$ has \aas locally bounded treewidth.
\end{theorem}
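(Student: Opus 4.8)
The plan is to combine the tail bound from Lemma~\ref{lem:lcycles} with the well-known fact that a graph whose blocks are ``tree plus a bounded number of extra edges'' has bounded treewidth. More precisely, fix $r\geq 0$. We must exhibit a bound $t(r)$ such that a.a.s.\ every subgraph of $\gndn$ of radius $r$ has treewidth at most $t(r)$. A subgraph of radius $r$ is contained in some $r$-neighborhood $N_r(v)$, and treewidth is monotone under subgraphs, so it suffices to bound the treewidth of every $r$-neighborhood. Now if an $r$-neighborhood has at most $m$ more edges than vertices (i.e.\ ``excess'' at most $m$), then after deleting a vertex cover of the $m+1$ extra edges one is left with a subgraph of a tree, which has treewidth~$1$; adding those $\le 2(m+1)$ vertices back raises the treewidth to at most $2(m+1)+1$. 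Hence an a.a.s.\ bound on the excess of all $r$-neighborhoods immediately yields an a.a.s.\ bound on their treewidth.

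Next I would apply Lemma~\ref{lem:lcycles} to control the excess. Set $m = m(r)$ to be a suitable constant (depending only on $r$); the lemma says the probability that some $r$-neighborhood has excess at least $m$ is at most
$$
f(r,m)\,d(n)^{2r}\Bigl(\frac{d(n)^{2r+1}}{n}\Bigr)^m .
$$
Since $d(n) = n^{o(1)}$, for every fixed $r$ we have $d(n)^{2r} = n^{o(1)}$ and $d(n)^{2r+1}/n = n^{o(1)}/n = n^{-1+o(1)}$, so for any choice of $m\geq 1$ the whole expression is $n^{o(1)}\cdot n^{(-1+o(1))m} \to 0$ as $n\to\infty$. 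Thus, taking e.g.\ $m = m(r) = 1$ already suffices: a.a.s.\ every $r$-neighborhood of $\gndn$ has excess at most~$1$, hence treewidth at most~$3$. Combined with the reduction of the previous paragraph, this gives that a.a.s.\ every subgraph of radius $r$ has treewidth at most $t(r) := 3$ (one could even be this crude, or keep $t(r)$ as a function of $r$ if one prefers the statement uniform in how the bound is derived).

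One technical point that needs a little care, and which I expect to be the main (though minor) obstacle, is the quantifier order: ``locally bounded treewidth'' requires, for each fixed $r$, a single constant $f(r)$ and the a.a.s.\ statement is over $n\to\infty$ with $r$ fixed — it does not require a single event on which the bound holds simultaneously for all $r$. So we argue for each $r$ separately, and for each $r$ the failure probability above tends to $0$. (If one wanted the stronger ``for all $r$ simultaneously a.a.s.'' one would need to union-bound over $r$, but this is neither required by the definition nor true here, since $r$ can grow with $n$; the definition only asks about fixed $r$.) The remaining details are routine: verifying the treewidth-after-vertex-deletion inequality $\mathrm{tw}(G) \le \mathrm{tw}(G-X) + |X|$, and plugging $d(n)=n^{o(1)}$ into the bound of Lemma~\ref{lem:lcycles} to see the product vanishes. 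Finally, since locally bounded treewidth implies nowhere dense (as noted in the preliminaries), this also re-establishes that $\gndn$ is a.a.s.\ nowhere dense for $d(n)=n^{o(1)}$.
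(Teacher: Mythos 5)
Your main probabilistic step is the same as the paper's: apply Lemma~\ref{lem:lcycles} with $m=1$ and $r$ fixed, conclude that a.a.s.\ no $r$-neighborhood has more edges than vertices, and convert the small excess into a treewidth bound (the paper gets treewidth $2$ directly; your vertex-cover-deletion argument gives a slightly larger constant, which is fine). The reduction from radius-$r$ subgraphs to $r$-neighborhoods is also the same.

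The genuine gap is the parenthetical where you dismiss uniformity over $r$. ``A.a.s.\ locally bounded treewidth'' means there is a \emph{single} function $f$, not depending on $n$, such that with probability $1-o(1)$ the graph lies in the class $\{G:\forall r,\ \text{every radius-}r\text{ subgraph has treewidth at most } f(r)\}$; the event must hold for all $r$ simultaneously, exactly as in Theorem~\ref{thm:bounded_exp} for bounded expansion. (If the bounding function were allowed to depend on $n$, every finite graph would trivially qualify with $f\equiv n$ and the theorem would say nothing.) Your claim that the simultaneous version is ``not true here'' is also wrong: it is true, and the paper's proof is organized precisely around making it true. Writing $d(n)\le n^{g(n)}$ with $g(n)\to0$, the paper splits into the regime $r<1/(8g(n))$, where the computation you carried out gives failure probability $o(1)$, and the regime $r\ge 1/(8g(n))$, where the target bound $h(r)$ (with $h$ the inverse of $1/(8g)$) already exceeds $n$ and hence trivially dominates the treewidth of the whole graph; a third case absorbs large values of $f(r,1)$. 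This is exactly the boosting step your argument omits. Your pointwise-in-$r$ statement can be upgraded in the same way --- pick $R(n)\to\infty$ slowly enough that the failure probabilities for $r\le R(n)$ still sum to $o(1)$, and set $f(r)$ to dominate $\sup\{n: R(n)<r\}$ --- but as written your proof establishes a weaker statement than the theorem and asserts, incorrectly, that the stronger one fails.
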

\begin{proof}
    The show that a graph has locally bounded treewidth we have to
    show that the treewidth of every $r$-neighborhood is bounded by a
    function of~$r$ alone.

    Since $d(n) = n^{o(1)}$, there exists a monotone decreasing function
    $g(n)$ with $d(n) \le n^{g(n)}$ and $\lim_{n\to\infty} g(n) = 0$. Let $h(r)$
    be the inverse function of $1/8g(r)$. Since $g(n)$ is monotone
    decreasing, $h(r)$ exists and is monotone increasing. We show that for all
    $r\ge0$ every subgraph with radius $r$ has \aas treewidth at most $h(r)$.
    We distinguish between two cases. The first case is $r < 1/8g(n)$
    and $f(r,1)<n^{1/4}$.

    According to Lemma~\ref{lem:lcycles}, an $r$-neighborhood of $G$ has more
    edges than vertices with probability at most
    $$
    f(r,1)\frac{d(n)^{4r+1}}{n} \le f(r,1)n^{g(n) (4 \frac{1}{8g(n)}+1)-1} \le
    f(r,1)n^{-1/2 + g(n)} = o(1)
    $$
    We can conclude that every $r$-neighborhood has \aas treewidth at
    most~$2$.

    The second case is $r\geq1/8g(n)$, which means $h(r) \ge n$, so even the
    treewidth of the whole graph is \aas bounded by~$h(r)$ and the third case
    is given by $f(r,1)\geq n^{1/4}$ and the (total) treewidth is \aas bounded
    by $f(r,1)^4$. 

    Altogether, the treewidth of an $r$-neighborhood is \aas bounded by 2, by
    $h(r)$, or by $f(r,1)^4$. \qed
\end{proof}

\section{Algorithm for Subgraph Isomorphism}
\label{sec:algorithms}
In this section we solve \textsc{Subgraph Isomorphism}, which given a graph
$G$ and a graph $H$ asks, whether $G$ contains $H$ as a subgraph. This is
equivalent to FO-model checking restricted to only existential quantifiers.

%\begin{lemma}
%\label{lem:neighborhood_size}
%The expected size of an $r$-neighborhood in $\gndn$ is bounded by $2d(n)^r$.
%\end{lemma}
%\begin{proof}
%Let $v$ be an arbitrary but fixed vertex and let $X_i$ be a random variable
%with $X_i = 1$ if and only if there is a path of length at most $r$ from $v$
%to $i$. From Lemma~\ref{lem:pathr} we know that $\E[X_i]\leq
%\frac{2d(n)^r}{n}$. Let $N_r$ be the size of the $r$-neighborhood of vertex
%$v$. We will use linearity of expectation, which does not need the $X_i$ to be
%independent.
%$$
%\E[N_r] \leq \E\Big[ \sum_{i=1}^n X_i \Big] \leq \sum_{i=1}^n \E[X_i] \leq
%\sum_{i=1}^n \frac{2d(n)^r}{n} \leq 2d(n)^r
%$$
%Thus we have proven the claim.\qed
%\end{proof}
%
Let $H$ be a connected graph with $h$ vertices and radius $r$. In this section
we discuss how fast it can be decided whether $\gndn$ contains $H$ as a
subgraph. We first discuss the runtime of simple branching algorithms on
\ErRe graphs and how exploiting local structure may lead to better run-times.
We discovered that if the radius $r$ of the pattern graph is small, an
approach based on local structure is significantly faster.

For low-degree graphs there exists a simple branching algorithm to decide
whether a graph $G$ contains $H$ as a subgraph in time $O(\Delta^h n)$, where
$\Delta$ is the maximal degree in $G$. Let us first assume that $d(n)=d$ is
constant. There is nevertheless a non-vanishing probability that the maximal
degree of $\gnd$ is as large as $\sqrt{\log n}$. Therefore, the maximal degree
cannot be bounded by any function of $d$. This implies that a naive, maximal
degree based algorithm may have at least a quasi-linear dependence on $n$,
while we present an algorithm which has only a linear dependence on $n$.

Let us also assume that that $d(n)$ is of order $\log{n}$ and even that the
maximum degree is bounded by $O(d(n))$. A naive branching algorithm may
therefore decide whether $\gndn$ contains $H$ in expected time $O(d(n))^h n$.
We improve this result, not making any assumption about the maximal degree, by
replacing the factor $O(d(n))^h$ in the runtime with
$2^{O(h)}(d(n)\log n)^{O(r)}$,
where $r$ is the radius of~$H$. For graphs with small radius, the runtime is
no longer dominated by a factor $O(d(n))^h$.  The new algorithm
may be significantly smaller when $d(n)$ is, for example, of order $\log n$.

So far we only discussed connected subgraphs. Using color-coding techniques,
the results in this section can easily be extended to disconnected subgraphs,
where the radius of each component is bounded by $r$. Color-coding may,
however, lead to an additional factor of $c^h$ in the runtime: Assume $H$ has
$c$ components where the size of $H$ is $h$. We want to color each vertex of
$G$ uniformly at random. Assume $G$ contains $H$, then the probability that
every component of $H$ can be embedded using vertices of a single color is at
least $1/c^h$. So if $H$ can be embedded in $G$ we will answer yes after an
expected number of $c^h$ runs.

For the following result notice that if $d(n)$ is poly-logarithmic in
$n$ the runtime is quasi-linear in $n$. For $d(n)=n^{o(1)}$ the
dependence on $n$ is $n^{1+o(1)}$. The algorithm is given in the proof
for Theorem~\ref{lem:findH}.

\begin{lemma}\label{lem:nhood-small}
    In $\gndn$ holds with probability of at least $1-n^{-\frac14\log(n)}$
    that every $r$-neighborhood has size at most $\log(n)^{2r} d(n)^{r}$.
\end{lemma}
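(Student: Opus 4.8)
The plan is to bound the size of an $r$-neighborhood via a first-moment (union bound) argument over the possible ``shapes'' a large neighborhood could take, much in the spirit of Lemma~\ref{lem:lcycles}. First I would fix a vertex $v$ and perform a breadth-first search from $v$ up to depth $r$. If the resulting $r$-neighborhood $N_r(v)$ has more than $L := \log(n)^{2r}d(n)^r$ vertices, then already the BFS tree $T$ rooted at $v$ has more than $L$ vertices and hence more than $L$ edges (counting the root contributes the off-by-one). So it suffices to bound the probability that some vertex has a tree of depth $\le r$ and more than $L$ vertices present as a subgraph of $\gndn$.

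Next I would estimate, for a single fixed vertex $v$, the expected number of such trees. A tree of depth $\le r$ on exactly $\ell$ vertices can be described by choosing the $\ell-1$ non-root vertices among the $n-1$ others (at most $n^{\ell-1}$ ways) together with the tree structure (a branching factor that is absorbed into a $\ell^{O(\ell)}$-type term, or one can instead encode the tree as a subgraph on a fixed vertex set and multiply by the number of labelled forests, which is at most $\ell^{\ell}$). Each of the $\ell-1$ tree edges is present independently with probability $d(n)/n$, contributing $(d(n)/n)^{\ell-1}$. Hence the expected number of depth-$\le r$ trees rooted at $v$ with exactly $\ell$ vertices is at most $\ell^{\ell}\,n^{\ell-1}(d(n)/n)^{\ell-1} = \ell^{\ell}d(n)^{\ell-1}$. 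Summing over $v$ introduces a factor $n$, and summing over all $\ell > L$ gives roughly $n\sum_{\ell>L}\ell^{\ell}d(n)^{\ell-1}$; I would then argue this is dominated by its first term (the ratio of consecutive terms being $\ge$ the threshold we need to beat), so the whole sum is at most $n\cdot L^{O(L)}d(n)^{O(L)}$ times the probability that the BFS tree actually has that many vertices --- but this term alone is far too large to be useful without a sharper count, so a more careful argument is needed.

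The cleaner route, which I would actually carry out, is to control the \emph{number of children} level by level. Condition on the BFS having reached some set $S$ of already-discovered vertices with $|S|\le L$; the number of new vertices discovered at the next level is stochastically dominated by a $\mathrm{Bin}(n, 1-(1-d(n)/n)^{|S|})\le \mathrm{Bin}(n, |S|d(n)/n)$, i.e. by a binomial with mean at most $|S|d(n)$. Thus, as long as the neighborhood stays below size $L$, the expected growth factor per level is at most $d(n)$, and after $r$ levels the expected size is at most $d(n)^r$; the claimed bound $\log(n)^{2r}d(n)^r$ leaves an extra factor $\log(n)^{2r}$ of slack for a concentration bound. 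I would iterate a Chernoff bound level by level: the probability that level $i+1$ exceeds $\log(n)^2$ times the expected $d(n)\cdot(\text{size so far})$ is at most $n^{-\Theta(\log n)}$ by the multiplicative Chernoff bound (using that the relevant mean is at least $d(n)\ge 2$, say, once the neighborhood is nonempty, so the deviation $\delta\approx\log(n)^2$ forces an exponent of order $\log^2 n$). A union bound over the $r\le \log n$ levels and over all $n$ starting vertices absorbs into the $n^{-\frac14\log n}$ bound. The main obstacle is bookkeeping the conditioning correctly: the edges examined when expanding one level are disjoint from those examined at earlier levels, so the level-by-level binomial domination is legitimate, but one must be careful that ``stays below $L$'' is exactly the event whose failure we are bounding, so the argument is naturally phrased as: either every level is within its Chernoff-typical range (in which case the final size is $\le d(n)^r\prod_{i=1}^r\log(n)^2 = \log(n)^{2r}d(n)^r$), or some level's Chernoff bound failed, an event of probability $\le r\cdot n\cdot n^{-\Theta(\log n)}\le n^{-\frac14\log n}$ for large $n$.
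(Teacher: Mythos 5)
Your final argument is correct, but it does considerably more work than the paper, whose proof is three lines: a Chernoff bound shows that a single vertex has degree at least $\log(n)^2 d(n)$ with probability at most $n^{-\frac14\log n}$, a union bound extends this to the maximum degree $\hat D$, and every $r$-neighborhood trivially has size at most $\hat D^r \le \log(n)^{2r}d(n)^r$. Your level-by-level BFS exposure with binomial domination and a per-level Chernoff bound is a legitimate alternative decomposition --- and you correctly handle the one delicate point, namely that the edges examined when expanding a level are disjoint from those examined earlier --- but once every vertex has degree at most $\log(n)^2 d(n)$, the per-level growth factor you establish probabilistically already holds deterministically, so the conditioning machinery is unnecessary for the bound actually claimed. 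Your route would pay off if one wanted genuine concentration of the neighborhood size around $d(n)^r$ (i.e.\ a bound like $(Cd(n))^r$ up to polylogarithmic slack, rather than a worst-case degree raised to the $r$-th power); for this lemma it buys nothing. The opening first-moment computation over trees is, as you concede yourself, a dead end and should be cut. One small blemish common to both write-ups: the telescoped product (respectively $\hat D^r$) bounds the size of the $r$-th level, whereas the full neighborhood has size $\sum_{i\le r}\hat D^i$, costing a constant factor that neither proof accounts for.
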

\begin{proof}
    The Chernoff Bound states for the degree $D$ of an individual vertex
    that $\P[D \ge  x] \le e^{-(\frac13\frac{x}{d(n)}-1)d(n)}$
    and therefore
    $\P[D \ge  \log(n)^{2} d(n)] \le n^{-\frac14\log(n)}$.
    Let $\hat D$ be the maximal degree of the graph.
    With the union bound we have a similar bound for $\hat D$.
    Every $r$-neighborhood has size at most $\hat D^r$.
\end{proof}

\begin{theorem}\label{lem:findH}
    Let $H$ be a connected graph with $h$ vertices and radius $r$.

    There is a deterministic algorithm that can find out whether $H$
    occurs as a subgraph in $\gndn$ in expected time 
    $2^{O(h)}(d(n)\log n)^{O(r)}n$.
\end{theorem}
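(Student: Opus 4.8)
The plan is to build the subgraph-isomorphism test by exploiting the locally tree-like structure of $\gndn$ established in Section~\ref{sec:local_structure}, combined with the degree/neighborhood size bound of Lemma~\ref{lem:nhood-small}. The algorithm proceeds as follows. First, pick a center vertex $u_0$ of $H$ (one realizing the radius $r$), and observe that in any embedding of $H$ into $G=\gndn$ the image of $u_0$ must have all of $H$ inside its $r$-neighborhood. So the algorithm loops over all $n$ choices of a vertex $v \in V(G)$ as the image of $u_0$, and for each one restricts attention to $G[N_r(v)]$. By Lemma~\ref{lem:nhood-small}, with probability at least $1-n^{-\frac14\log n}$ every such neighborhood has at most $N := \log(n)^{2r}d(n)^r$ vertices, so we may assume this bound holds (and fall back to the naive $O(d(n)^h n)$ branching on the rare failure event, which contributes a vanishing term to the expected running time since $n^{-\frac14\log n}$ beats any polynomial).

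The core of the argument is that inside a single $r$-neighborhood $G[N_r(v)]$ we can decide whether $H$ embeds (with $u_0 \mapsto v$) in time $2^{O(h)}(d(n)\log n)^{O(r)}$. Here I would use the locally-bounded-treewidth structure: by Lemma~\ref{lem:lcycles}, with overwhelming probability the $r$-neighborhood has only a bounded number of extra edges beyond a spanning tree — more precisely, one argues (as in the previous section) that $G[N_r(v)]$ has treewidth bounded in terms of $r$ and, more importantly for the running time, that one can compute a tree decomposition of $G[N_r(v)]$ of small width efficiently. Then standard dynamic programming for \textsc{Subgraph Isomorphism} on bounded-treewidth host graphs (Courcelle-type / explicit DP over the decomposition) runs in time $|V(G[N_r(v)])| \cdot 2^{O(h)} \cdot (\text{poly in width})$; since $|V(G[N_r(v)])| \le N = (d(n)\log n)^{O(r)}$ and the width is $O(r)$-bounded, this gives $2^{O(h)}(d(n)\log n)^{O(r)}$ per center. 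Multiplying by the $n$ choices of center gives the claimed bound. Alternatively — and perhaps more cleanly — one avoids treewidth DP entirely: since $G[N_r(v)]$ is a tree plus $m = O(1)$ extra edges (a.a.s.), guess the $O(1)$ subdivision-relevant structure and reduce to subtree-isomorphism-style matching, which is polynomial; I would present whichever is simpler to state rigorously.

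The main obstacle I anticipate is handling the rare bad events uniformly so that the \emph{expected} (not just a.a.s.) running time bound holds. There are three separate low-probability failure modes: an $r$-neighborhood that is too large (Lemma~\ref{lem:nhood-small}), an $r$-neighborhood with too many excess edges (Lemma~\ref{lem:lcycles}), and the overall treewidth being large. On each bad event one must bound the running time by something crude — e.g. the trivial $O(d(n)^h n)$ or even $2^{O(n^2)}$ brute force — and check that (failure probability) $\times$ (crude running time) is $o$ of the main term, or at least $O$ of it. Because the failure probabilities in Lemmas~\ref{lem:nhood-small} and~\ref{lem:lcycles} decay super-polynomially (like $n^{-\Theta(\log n)}$), this balancing goes through, but the bookkeeping — choosing the right fallback algorithm for each event and confirming the product estimates — is the delicate part. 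A secondary subtlety is that the degree bound and the excess-edge bound must hold \emph{simultaneously for all} $n$ centers; this is already handled by the union bounds inside those lemmas, but one must be careful to invoke them at the right quantification (all neighborhoods at once, not a fixed one), which the stated lemmas already provide.
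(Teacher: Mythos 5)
Your overall architecture (center the search on the image of a radius-realizing vertex of $H$, restrict to $r$-neighborhoods, use Lemma~\ref{lem:nhood-small} to bound their size, handle its failure event by a crude fallback) matches the paper, but your core step is genuinely different: you want to exploit the bounded treewidth of the \emph{host} neighborhood $G[N_r(v)]$ and run a subgraph-isomorphism DP over its tree decomposition, whereas the paper exploits the bounded treewidth of a \emph{pattern}. Concretely, the paper extracts from $H$ a spanning connected subgraph $H'$ that is a tree plus two edges (hence treewidth at most $2$ \emph{unconditionally}, with no assumption on the random graph), enumerates all copies of $H'$ in each neighborhood via color-coding~\cite{Alon:1995:COL:210332.210337,Chen:2006:EEN:2162624.2162644}, and then checks which copies extend to $H$ by adding edges among the same $h$ vertices. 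The running time of the enumeration is proportional to the number of copies of $H'$ found, and by Lemma~\ref{lem:num_h} the \emph{expected} number of copies of an $h$-vertex subgraph with $h+1$ edges is $O(h^{2h+2}d(n)^{h+1}/n)=o(1)$. This linearity-of-expectation argument is the whole point of the detour through $H'$: it yields an expected-time bound directly, with no conditioning on rare structural events.

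The gap in your version is precisely in the step you flag as ``delicate bookkeeping'' and then wave through. You assert that the failure probability of Lemma~\ref{lem:lcycles} decays like $n^{-\Theta(\log n)}$; it does not. For one excess edge the bound is $f(r,1)\,d(n)^{4r+1}/n=n^{-1+o(1)}$, i.e., only polynomially small. Consequently the product (failure probability)\,$\times$\,(crude fallback time) does not vanish for the fallbacks you propose: with a global brute force of cost $n^{O(h)}$ (and note that $O(d(n)^h n)$ is \emph{not} a valid worst-case fallback, since the maximum degree of $\gndn$ is not bounded by $O(d(n))$) the product is $n^{h-1+o(1)}$, which exceeds the target $2^{O(h)}(d(n)\log n)^{O(r)}n$ already for $h\ge 3$, and with a $2^{O(n^2)}$ fallback it is astronomical. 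To repair this within your framework you would have to localize the fallback --- e.g., brute-force only inside those (already size-bounded) neighborhoods that violate the excess-edge bound, at cost $(d(n)\log n)^{O(rh)}$ each, and bound the expected number of such violating neighborhoods --- or abandon the host-treewidth conditioning altogether in favor of the paper's expected-copy-count argument. A second, smaller omission is that your DP needs a tree decomposition of each neighborhood to be computed within the time budget; this is easy when the neighborhood is unicyclic but must be said, and on the bad event it again feeds into the fallback accounting above.
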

\begin{proof}
    We sketch the algorithm briefly.  The algorithm works on
    a graph $G=\gndn$.  In the following we assume that
    every $r$-neighborhood in $G$ has size at most $d(n)^r\log(n)^{2r}$.
    By Lemma~\ref{lem:nhood-small} this assumption holds with a probability
    of at least $1-n^{\log(n)/4}$ and we can easily check it within
    the stated time bounds.
    Should the assumption be wrong, we can use a brute force
    algorithm without affecting the average running time.

    In a preprocessing step we look at the connected graph~$H$ and
    construct a subgraph~$H'$ that is also connected, but consists
    only of a tree with two additional edges (if possible, otherwise
    we set $H'=H$).
    
    We enumerate all $r$-neighborhoods in~$G$ and try to find $H$ in
    every one of them as follows:  By using color-coding we
    enumerate all subgraphs in the $r$-neighborhood that are
    isomorphic to~$H'$.  This can be done by using the algorithm for 
    finding a graph of bounded
    treewidth~\cite{Alon:1995:COL:210332.210337}
    with the enumeration techniques in~\cite{Chen:2006:EEN:2162624.2162644}.
    The expected time needed is
    $2^{O(h)}(d(n)\log(n))^{O(r)}$ times the number of subgraphs
    that are found.  However, by Lemma~\ref{lem:num_h}
    the latter number is bounded by a constant.

    After enumerating all subgraphs isomorphic to $H'$ we have to find
    out whether $G$ contains $H$ as a subgraph.  If this turns out to
    be true, then $H$ can be found only somewhere where $H'$ was
    found.  Hence, it suffices to look at all found $H'$ in $G$ and see
    whether by adding a subset of the possible $\binom{h}{2}$ edges we
    can find~$H$.  This can be done in time
    $O(2^{h^2}d(n)^r\log(n)^2)$, which is asymptotically faster than
    the remaining part.
\end{proof}

\section{Experimental Evaluation of \BaAl-Graphs}
\label{sec:experiments}
In the previous section, we showed that \ErReGrs have bounded expansion for
edge probability $p=d/n$ (with constant $d$) and are nowhere dense with $p = n^{o(1)}/n$. In this
section, we discuss the sparsity of the \BaAl model. It is known that this
model has \emph{not} \aas bounded expansion, because it contains an unbounded
clique with non-vanishing probability~\cite{StrucSpars}. It is not known,
however, if it is (or is not) \aas somewhere-dense. Our experiments seem to
imply that on average \BaAlGrs seem to be dense but that this density is
limited to early vertices: In the \BaAl model, vertices with high degree tend
to be preferred for new connections. This means that edge probabilities are
not independent. Moreover, the expected degree $d(i) = \sqrt{n/i}$ for a
vertex $i$ is less uniform than it is for \ErReGrs, where
$d(i)=pn$.

To evaluate the expansion properties of the \BaAl-model, we compute transitive
fraternal augmentations and $p$-centered colorings. These have been introduced
by \Nesetril and Ossona de Mendez, and are highly related to bounded expansion
and a tool for developing new and faster algorithms. A graph class has bounded
expansion if and only if the maximum in-degree of transitive fraternal
augmentations is bounded, or the graph admits a $p$-centered coloring with
bounded number of colors.

\begin{definition}[Transitive fraternal augmentation \cite{NOdM08}]
Let $\overrightarrow{G}$ be a directed graph. A \textit{1-transitive fraternal
augmentation} of $\overrightarrow{G}$ is a directed graph $\overrightarrow{H}$
with the same vertex set, including all the arcs of $\overrightarrow{G}$ and
such that, for any vertices $x$, $y$, $z$,
\begin{itemize}
\item if ($x$, $z$) and ($z$, $y$) are arcs of $\overrightarrow{G}$ then
($x$, $y$) is an arc of $\overrightarrow{H}$ (\textup{transitivity}),
\item if ($x$, $z$) and ($y$, $z$) are arcs of $\overrightarrow{G}$ then
($x$, $y$) or ($y$, $x$) is an arc of $\overrightarrow{H}$ (\textup{fraternity}).
\end{itemize}
A \textit{transitive fraternal augmentation} of a directed graph
$\overrightarrow{G}$ is a sequence $\overrightarrow{G}_1
\subseteq \dots \subseteq \overrightarrow{G}_i \subseteq
\dots \subseteq \overrightarrow{G}_n$, such that $\overrightarrow{G}_{i+1}$ is
a 1-transitive fraternal augmentation of $\overrightarrow{G}_{i}$.
\end{definition}

\begin{definition}[$p$-centered coloring \cite{nesetril:08-1}]
\label{def:p-center}
For an integer $p$, a $p$-\textit{centered coloring} of $G$ is a coloring of
the vertices such that any connected subgraph $H$ induced on the vertices of
an arbitrary set of $i$ colors ($i \leq p$), $H$ must have at least one color
that appears exactly once.
\end{definition}
Showing that the maximum in-degree of a transitive fraternal augmentation or
the number of colors needed for a $p$-centered coloring does not grow with the
size of the graph is a way to prove that a graph has bounded
expansion~\cite{NOdM08}. When designing algorithms, $p$-centered colorings can
be used to solve hard problems efficiently. By using $p$-centered colorings,
we can decompose a graph into small, well-structured subgraphs such that
$\npclass$-hard problems can be solved easily on each subgraph before
combining these small solutions to get a solution for the entire graph. It is
important that the number of colors needed for a $p$-centered coloring for a
fixed $p$ is small, as the runtime usually is a function of the number of
colors needed. If a graph class does not have bounded expansion; that is, the
number of colors grows with $n$, but very slowly, such as $\log{\log{n}}$,
using these algorithms might still be practical.

One example problem which can be solved directly using $p$-centered colorings
is \textsc{Subgraph Isomorphism}, where one asks if a graph $H$ is contained
in a graph $G$ as a subgraph. In general graphs, this problem is
$\wclass$-hard when parameterizing by the size of
$H$~\cite{parametrized_complexity}. However, there exist an algorithm, whose
runtime is a function of the number of colors needed for a $p$-centered
coloring, where $p$ depends on the size of $H$~\cite{sparsity}. So, regardless
of the fact whether \BaAlGrs are theoretically sparse or not, calculating the
number of colors of a $p$-centered coloring for different graph sizes has
direct impact on the feasibility of a whole class of algorithms on these
graphs.

\subsection{Experiment Overview} \label{sec:pipeline}

We analyze the expansion properties of \BaAlGrs by computing transitive
fraternal augmentations and $p$-centered colorings. In the following, we
describe the heuristics used to compute these. In order to compute the
transitive fraternal augmentations of a graph $G$, the graph is oriented to a
directed graph $\overrightarrow{G}_1$ by using low-degree orientation, in
which every edge $(u, v)$ in $G$ is transformed to an arc $(u, v)$ in
$\overrightarrow{G}_1$ if the degree of $u$ is greater than the degree of $v$.
Then, transitive fraternal augmentations are applied to
$\overrightarrow{G}_1$, which yield a sequence $\overrightarrow{G}_1,
\overrightarrow{G}_2, \dots, \overrightarrow{G}_i$. The augmentation heuristic
we used was proposed in earlier work~\cite{felix:15}: To build graph
$\overrightarrow{G}_i$ from $\overrightarrow{G}_{i-1}$ we need to perform
transitive fraternal augmentations. First we create the set $F$ of fraternal
edges of $\overrightarrow{G}_{i-1}$. Let $G_F$ be the graph induced by $F$.
Now we can orient the edges of $G_F$ by the same low-degree orientation
performed earlier to get the directed fraternal edges $\overrightarrow{F}$
that are added to $\overrightarrow{G}_{i-1}$ and result in
$\overrightarrow{G}_{i}$. Now we can color the undirected graph $G_i$ of
$\overrightarrow{G}_i$ by iterating through the vertices in a descending-
degree order and assign each vertex the lowest color that does not appear in
its neighborhood. We then check whether that coloring is a $p$-centered
coloring of the input graph $G$. If this is not the case we repeat this
procedure for $G_{i+1}$.%, see Algorithm~\ref{alg:pcc} for the pseudo-code.

\begin{algorithm}
\caption{Computing $p$-centered colorings}
\label{alg:pcc}
\begin{algorithmic}[1]
\Procedure{Compute\textendash $p$-centered\textendash Colorings}{$G, p$}
\State Create $\overrightarrow{G}_1$ from $G$ using low-degree orientation
\State $i \gets 1$
\Loop
\State $c \gets $Greedy coloring of the undirected graph $G_i$
\If{$c$ is a $p$-centered coloring of $G$}
\State \Return{$c$}
\EndIf
\State Compute 1-transitive fraternal augmentation of $\overrightarrow{G}_i$ to get $\overrightarrow{G}_{i+1}$
\State $i \gets i+1$
\EndLoop
\EndProcedure
\end{algorithmic}
\end{algorithm}

\subsection{\BaAl Graphs are Empirically Dense}

First, we analyze the maximum in-degree of transitive fraternal augmentations.
We ran the previously described algorithm on random \BaAlGrs with $d = 2$ for
different sizes ($500 \leq n \leq 3000$) and calculated the maximum in-degree
of up to five transitive fraternal augmentation steps. The results are shown
in Figure~\ref{fig:result1a}. Each data point is an average over ten runs with
the same $n$. For all graphs both the maximum in-degree grows with $n$, which
would not be the case for graphs with bounded expansion.

To evaluate how well the expansion properties of \BaAlGrs can be practically
exploited, we analyzed the number of colors needed to construct $p$-centered
colorings. We constructed $3$-~and $4$-centered colorings. with the same graph
parameters and sizes than before. The results are shown in
Figure~\ref{fig:result1b}. For the analyzed range, the number of colors needed
grows steadily. Furthermore, the number of colors needed to construct
$4$-centered colorings is substantially higher than the number of colors
needed for $3$-centered colorings. Computing higher order colorings or
colorings for larger graphs was infeasible with the used algorithm. It seems
practically impossible to use $p$-centered colorings algorithmically for
\BaAlGrs. We have to note that the used algorithm is only a heuristic and the
real values might be much better than what we have computed. But since these
heuristics work well for graphs that have low treedepth colorings, it is
unlikely that the graphs have bounded coloring number for $p$-centered
colorings.
\begin{figure}
\centering
\subfloat[Transitive fraternal augmentations]{
\includegraphics[width=0.4\textwidth]{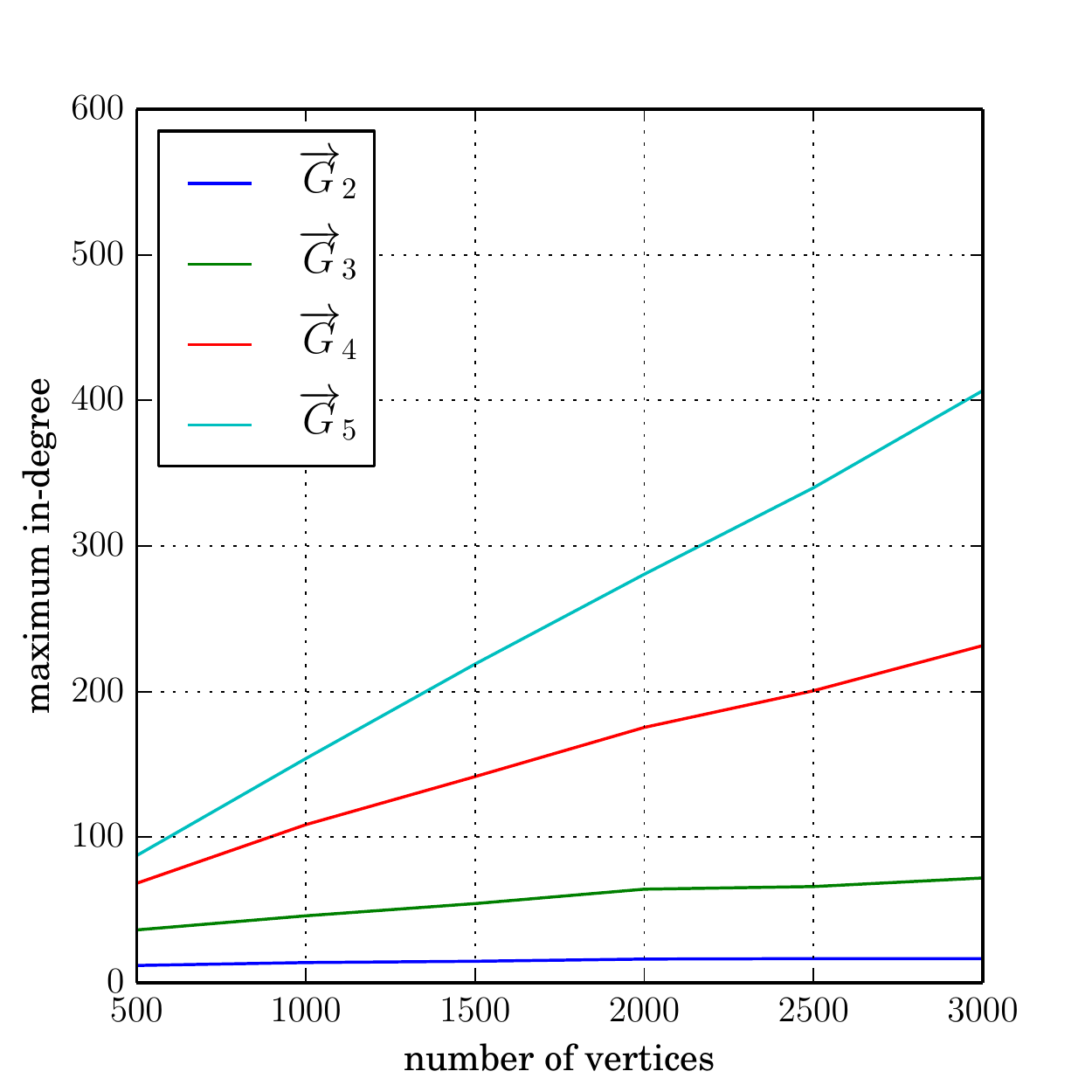}
\label{fig:result1a}
}%
\subfloat[$p$-centered colorings]{
\includegraphics[width=0.4\textwidth]{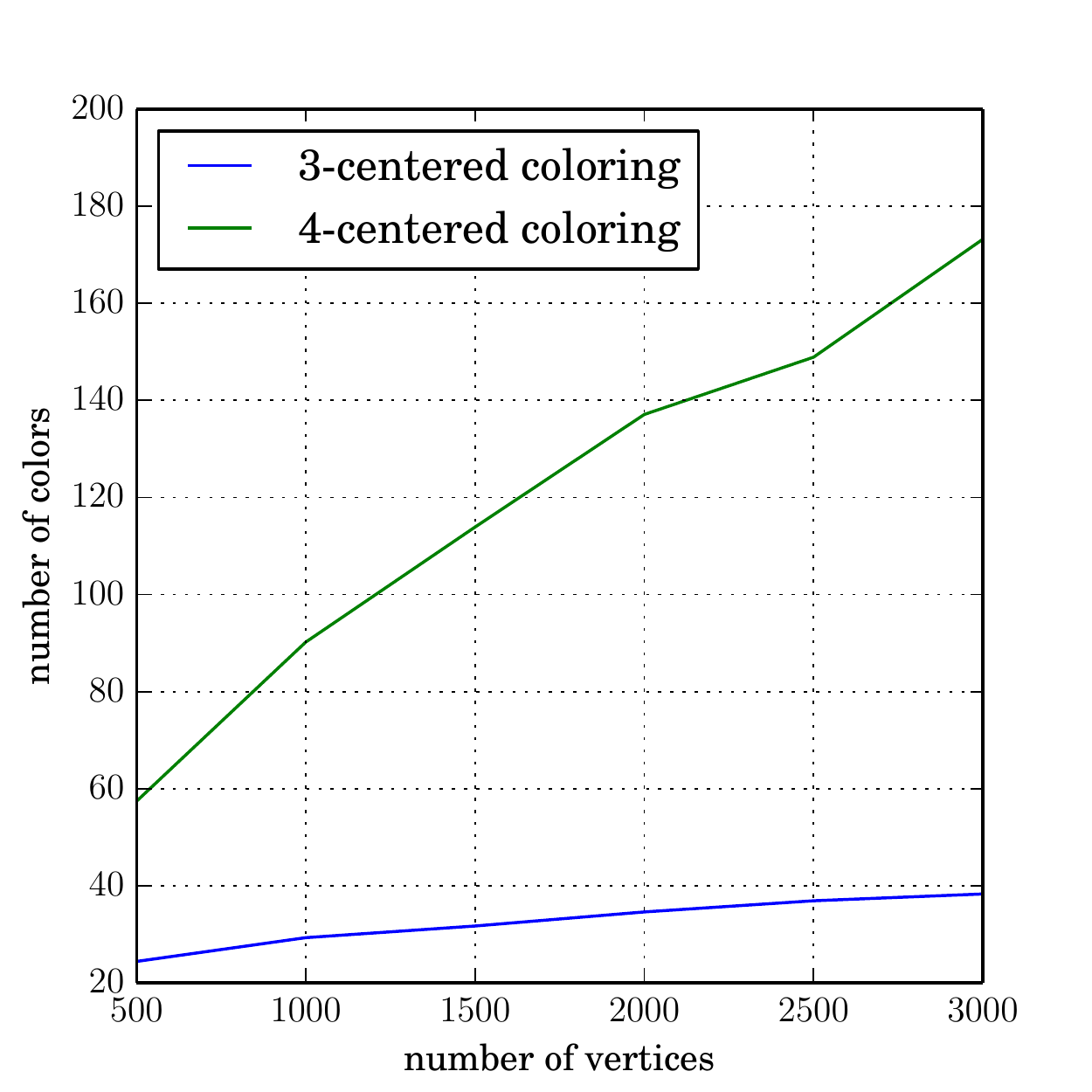}
\label{fig:result1b}
}%
\caption{Results for \BaAlGrs with $d = 2$ for increasing $n$.}
\label{fig:result1}
\end{figure}
\subsection{Density Seems Limited to Early Vertices}
\begin{figure}
\centering
\subfloat[Transitive fraternal augmentations]{
\includegraphics[width=0.4\textwidth]{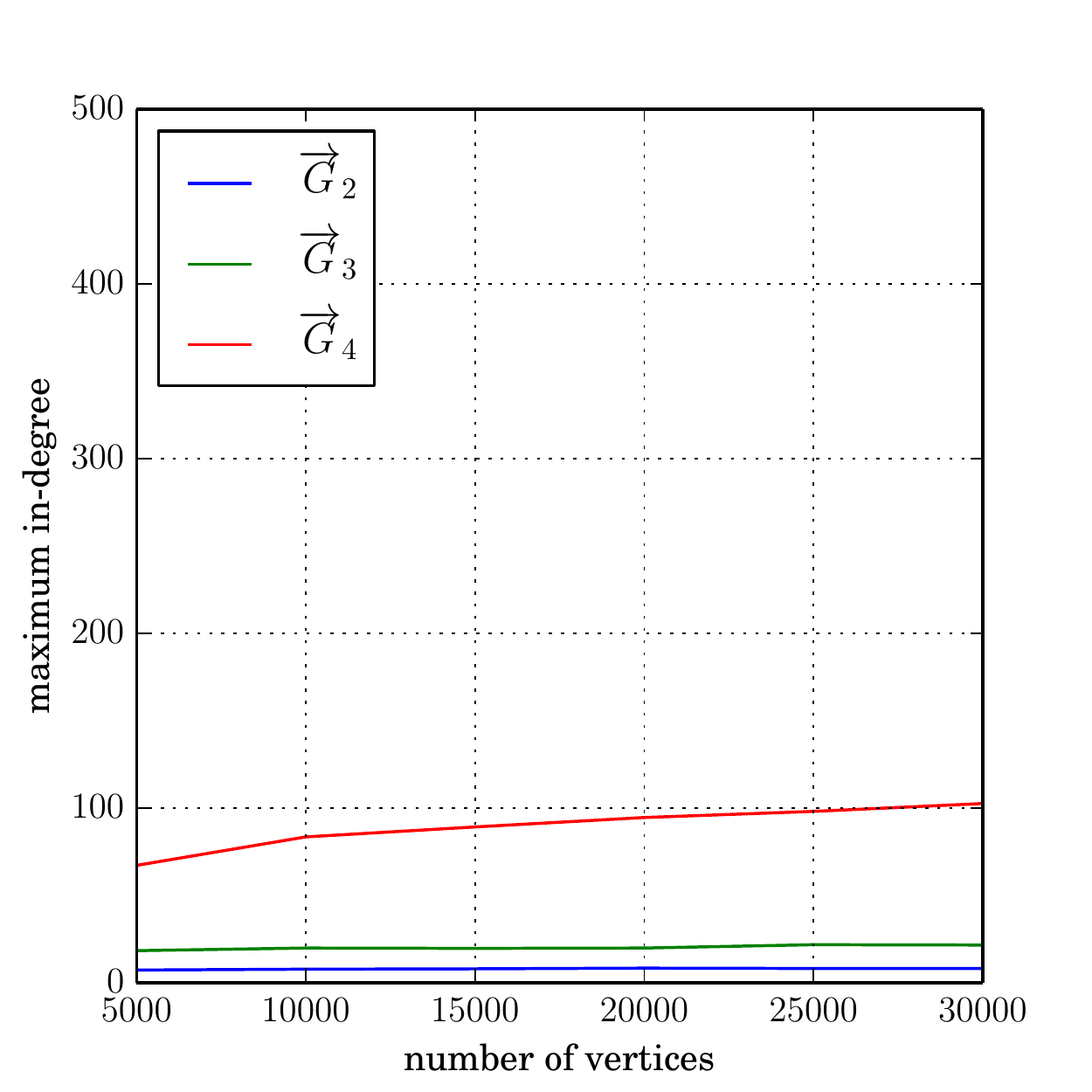}
}
\subfloat[$p$-centered colorings]{
\includegraphics[width=0.4\textwidth]{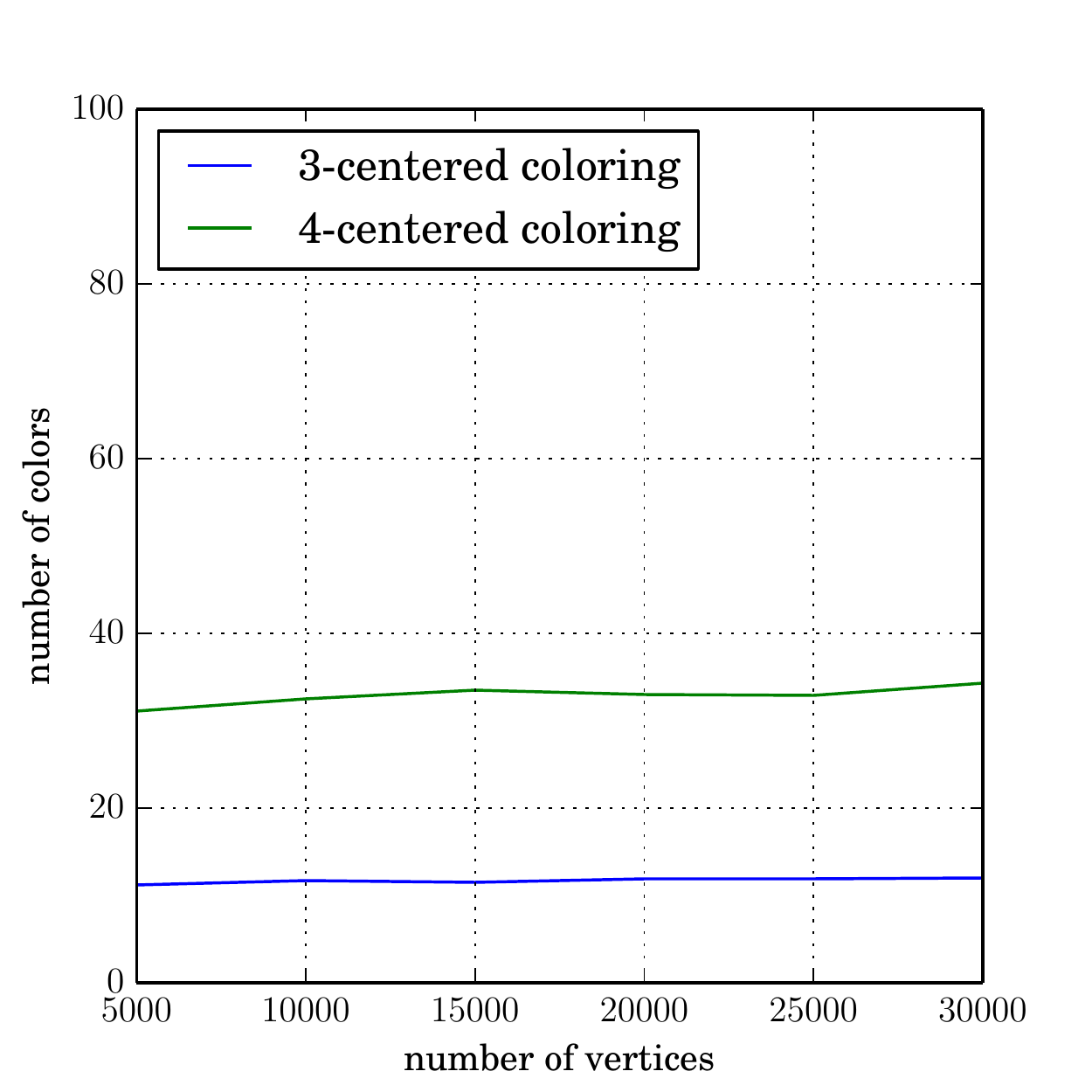}
}
\caption{Results for \BaAlGrs with $d = 2$ for increasing $n$ after deleting
the first 10\% of vertices.}
\label{fig:result2}
\end{figure}
Previously, we showed that the colors needed to construct $p$-centered
colorings of small graphs can be very high. In this section we discover that
the early vertices of the random process heavily affect these results. We
remove the first 10\% of the vertices added in the random process and analyze
the maximum in-degree of transitive fraternal augmentations and number of
colors needed to construct $p$-centered colorings. By removing those 10\%, we
can construct $p$-centered colorings for much larger graphs ($5000 \leq n \leq
30000$), see Fig.~\ref{fig:result2}. The required number of colors for
$p$-centered colorings and maximum in-degree of transitive fraternal
augmentations remain stable and do not seem to depend on the number of
vertices. This suggests that these 10\% of the early vertices contain almost
all of the density of \BaAlGrs. This is of course a linear factor and it
remains to see if one can use much smaller functions of $n$, like for example
$\log{n}$. The sizes of the graphs at hand, however, were not large enough to
investigate sub-linear functions of $n$ with a meaningful result.

\section{Conclusion}

In this work we gave an alternative proof that $\gnd$ has \aas bounded
expansion and have shown that $\gndn$ with $d(n) = n^{o(1)}$ has \aas locally
bounded treewidth. Our results are based on the fact that local neighborhoods
of \ErReGrs are tree-like with high probability. It is
known~\cite{Grohe07logic} that for a graph $G = \gndn$ with $d(n)=n^{o(1)}$
and a first-order formula $\varphi$ one can decide whether $G \models \varphi$
in expected time $f(|\varphi|)n^{1+o(1)}$ for some functions $f$ and $g$. This
result can also be proven using our techniques. It remains to show whether it
is possible to answer this question in linear expected fpt-time (where $d(n)n$
is the expected number of edges), \ie $O(f(|\varphi|)d(n)n)$. In this paper,
we also presented a more efficient algorithm for the subgraph isomorphism
problem on \ErReGrs if the pattern graph has small radius. It would be
interesting to consider other measures for the pattern graph as well, such as
treewidth or treedepth. Furthermore, we gathered empirical evidence which
suggests that \BaAlGrs are somewhere dense. It would be interesting to prove
this conjecture.

\bibliographystyle{splncs}
\bibliography{references}

\begin{thebibliography}{10}

\bibitem{bollobas_2001}
Bollobás, B.:
\newblock Random Graphs. 2nd edn.
\newblock Cambridge University Press (2001)

\bibitem{erdos}
Erd\H{o}s, P., R\'{e}nyi, A.:
\newblock On random graphs.
\newblock Publicationes Mathematicae \textbf{6} (1959)  290--297

\bibitem{fagin1976}
Fagin, R.:
\newblock Probabilities on finite models.
\newblock J. Symbolic Logic \textbf{41}(1) (03 1976)  50--58

\bibitem{CT2003}
Coja{-}Oghlan, A., Taraz, A.:
\newblock Colouring random graphs in expected polynomial time.
\newblock In: {STACS} 2003. (2003)  487--498

\bibitem{4276571}
Dawar, A., Grohe, M., Kreutzer, S.:
\newblock Locally excluding a minor.
\newblock In: 22nd Annual IEEE Symposium on Logic in Computer Science (LICS
  2007). (July 2007)  270--279

\bibitem{Flum:2002}
Flum, J., Frick, M., Grohe, M.:
\newblock Query evaluation via tree-decompositions.
\newblock J. ACM \textbf{49}(6) (November 2002)  716--752

\bibitem{Gao:2012}
Gao, Y.:
\newblock Treewidth of {Erdős--Rényi} random graphs, random intersection
  graphs, and scale-free random graphs.
\newblock Discrete Appl. Math. \textbf{160}(4-5) (March 2012)  566--578

\bibitem{NOdM08}
Ne\v{s}et\v{r}il, J., Ossona~de Mendez, P.:
\newblock Grad and classes with bounded expansion {I}. {D}ecompositions.
\newblock European Journal of Combinatorics \textbf{29}(3) (2008)  760--776

\bibitem{DKT2013}
Dvořák, Z., Kráľ, D., Thomas, R.:
\newblock Testing first-order properties for subclasses of sparse graphs.
\newblock J. {ACM} \textbf{60}(5) (2013)  36:1--36:24

\bibitem{Grohe:2014}
Grohe, M., Kreutzer, S., Siebertz, S.:
\newblock Deciding first-order properties of nowhere dense graphs.
\newblock In: Proceedings of the Forty-sixth Annual ACM Symposium on Theory of
  Computing. STOC '14, New York, NY, USA, ACM (2014)  89--98

\bibitem{NESETRIL2012350}
Ne\v{s}et\v{r}il, J., de~Mendez, P.O., Wood, D.R.:
\newblock Characterisations and examples of graph classes with bounded
  expansion.
\newblock European Journal of Combinatorics \textbf{33}(3) (2012)  350 -- 373
  Topological and Geometric Graph Theory.

\bibitem{FOUNTOULAKIS201518}
Fountoulakis, N., Friedrich, T., Hermelin, D.:
\newblock On the average-case complexity of parameterized clique.
\newblock Theoretical Computer Science \textbf{576} (2015)  18 -- 29

\bibitem{locallyboundedtreewidth}
Frick, M., Grohe, M.:
\newblock Deciding first-order properties of locally tree-decomposable
  structures.
\newblock J. ACM \textbf{48}(6) (November 2001)  1184--1206

\bibitem{DBLP:conf/lics/DawarGK07}
Dawar, A., Grohe, M., Kreutzer, S.:
\newblock Locally excluding a minor.
\newblock In: 22nd {IEEE} Symposium on Logic in Computer Science {(LICS} 2007),
  10-12 July 2007, Wroclaw, Poland, Proceedings, {IEEE} Computer Society (2007)
   270--279

\bibitem{Grohe07logic}
Grohe, M.:
\newblock Logic, graphs, and algorithms (2007)

\bibitem{Grohe2001}
Grohe, M.
\newblock In: Generalized Model-Checking Problems for First-Order Logic. (2001)
   12--26

\bibitem{Barabasi509}
Barab{\'a}si, A.L., Albert, R.:
\newblock Emergence of scaling in random networks.
\newblock \textbf{286}(5439) (1999)  509--512

\bibitem{cohen}
Cohen, R., Havlin, S.:
\newblock Scale-free networks are ultrasmall.
\newblock Phys. Rev. Lett. \textbf{90} (Feb 2003)  058701

\bibitem{kamrul}
Kamrul, M.H., Hassan, M.Z., Pavel, N.I.:
\newblock Dynamic scaling, data-collapse and self-similarity in
  {B}arabási--{A}lbert networks.
\newblock Journal of Physics A: Mathematical and Theoretical \textbf{44}(17)
  (2011)  175101

\bibitem{klemm}
Klemm, K., Egu\'{\i}luz, V.M.:
\newblock Growing scale-free networks with small-world behavior.
\newblock Phys. Rev. E \textbf{65} (May 2002)  057102

\bibitem{StrucSpars}
Demaine, E.D., Reidl, F., Rossmanith, P., {S\'{a}nchez Villaamil}, F., Sikdar,
  S., Sullivan, B.D.:
\newblock Structural sparsity of complex networks: Random graph models and
  linear algorithms.
\newblock CoRR \textbf{abs/1406.2587} (2014)

\bibitem{diestel}
Diestel, R.:
\newblock Graph Theory.
\newblock (2010)

\bibitem{sparsity}
Ne\v{s}et\v{r}il, J., Ossona~de Mendez, P.:
\newblock Sparsity: Graphs, Structures, and Algorithms.
\newblock (2014)

\bibitem{subgraphtree}
Shamir, R., Tsur, D.:
\newblock Faster subtree isomorphism.
\newblock Journal of Algorithms \textbf{33}(2) (1999)  267 -- 280

\bibitem{nesetril:08-1}
{\Nesetril}, J., Ossona~de Mendez, P.:
\newblock Grad and classes with bounded expansion {II}. {A}lgorithmic aspects.
\newblock European Journal of Combinatorics \textbf{29}(3) (2008)  777--791

\bibitem{parametrized_complexity}
Downey, R.G., Fellows, M.R.:
\newblock Parameterized Complexity

\bibitem{felix:15}
Reidl, F.:
\newblock Structural Sparseness and Complex Networks.
\newblock PhD thesis, RWTH Aachen University (2015)

\end{thebibliography}
\end{document}